\newcommand{\ignore}[1]{}
\begin{document}

\title{An Algebraic Hardness Criterion for Surjective Constraint Satisfaction}

\author{Hubie Chen\\
Departamento LSI\\
Facultad de Inform\'{a}tica\\
Universidad del Pa\'{i}s Vasco\\
San Sebasti\'{a}n, Spain\\
\emph{and}\\
 IKERBASQUE, Basque Foundation for Science\\
Bilbao, Spain\\}

\date{ } 

\maketitle

\newcommand{\confversion}[1]{}
\newcommand{\longversion}[1]{#1}


\newtheorem{theorem}{Theorem}[section]
\newtheorem{conjecture}[theorem]{Conjecture}
\newtheorem{corollary}[theorem]{Corollary}
\newtheorem{proposition}[theorem]{Proposition}
\newtheorem{prop}[theorem]{Proposition}
\newtheorem{lemma}[theorem]{Lemma}
\newtheorem{remarkcore}[theorem]{Remark}
\newtheorem{exercisecore}[theorem]{Exercise}
\newtheorem{examplecore}[theorem]{Example}
\newtheorem{assumptioncore}[theorem]{Assumption}

\newenvironment{assumption}
  {\begin{assumptioncore}\rm}
  {\hfill $\Box$\end{assumptioncore}}

\newenvironment{example}
  {\begin{examplecore}\rm}
  {\hfill $\Box$\end{examplecore}}

\newenvironment{exercise}
 {\begin{exercisecore}\rm}
 {\hfill $\Box$\end{exercisecore}}

\newenvironment{remark}
 {\begin{remarkcore}\rm}
 {\hfill $\Box$\end{remarkcore}}

\newenvironment{proof}{\noindent\textbf{Proof\/}.}{$\Box$ \vspace{1mm}}

\newtheorem{researchq}{Research Question}

\newtheorem{newremarkcore}[theorem]{Remark}

\newenvironment{newremark}
  {\begin{newremarkcore}\rm}
  {\end{newremarkcore}}

\newtheorem{definitioncore}[theorem]{Definition}

\newenvironment{definition}
  {\begin{definitioncore}\rm}
  {\end{definitioncore}}

\newcommand{\ppequiv}{\mathsf{PPEQ}}
\newcommand{\eq}{\mathsf{EQ}}
\newcommand{\iso}{\mathsf{ISO}}
\newcommand{\ppeq}{\ppequiv}
\newcommand{\ppiso}{\mathsf{PPISO}}
\newcommand{\boolppiso}{\mathsf{BOOL}\mbox{-}\mathsf{PPISO}}
\newcommand{\csp}{\mathsf{CSP}}
\newcommand{\scsp}{\mathsf{SCSP}}
\newcommand{\gi}{\mathsf{GI}}
\newcommand{\ci}{\mathsf{CI}}
\newcommand{\rela}{\mathbf{A}}
\newcommand{\relb}{\mathbf{B}}
\newcommand{\relc}{\mathbf{C}}
\newcommand{\alga}{\mathbb{A}}
\newcommand{\algb}{\mathbb{B}}
\newcommand{\algab}{\mathbb{A}_{\relb}}

\newcommand{\idemp}{I}

\newcommand{\varv}{\mathcal{V}}
\newcommand{\variety}{\mathcal{V}}
\newcommand{\false}{\mathsf{false}}
\newcommand{\true}{\mathsf{true}}
\newcommand{\pol}{\mathsf{Pol}}
\newcommand{\inv}{\mathsf{Inv}}
\newcommand{\cc}{\mathcal{C}}
\newcommand{\alg}{\mathsf{Alg}}
\newcommand{\pitwo}{\Pi_2^p}
\newcommand{\sigmatwo}{\Sigma_2^p}
\newcommand{\pithree}{\Pi_3^p}
\newcommand{\sigmathree}{\Sigma_3^p}

\newcommand{\fancyg}{\mathcal{G}}
\newcommand{\tw}{\mathsf{tw}}

\newcommand{\mc}{\mathsf{MC}}
\newcommand{\mcs}{\mathsf{MC}_s}

\newcommand{\mcb}{\mathsf{MC_b}}

\newcommand{\qc}{\mathrm{QC}}
\newcommand{\normqc}{\mathrm{norm\mbox{-}QC}}
\newcommand{\rqc}{\mathsf{RQC\mbox{-}MC}}

\newcommand{\qcfo}{\mathrm{QCFO}}
\newcommand{\qcfofk}{\qcfo_{\forall}^k}
\newcommand{\qcfoek}{\qcfo_{\exists}^k}

\newcommand{\fo}{\mathrm{FO}}
\newcommand{\fok}{\mathrm{FO}^k}

\newcommand{\tup}[1]{\overline{#1}}

\newcommand{\nn}{\mathsf{nn}}
\newcommand{\bush}{\mathsf{bush}}
\newcommand{\width}{\mathsf{width}}

\newcommand{\un}{N^{\forall}}
\newcommand{\en}{N^{\exists}}

\newcommand{\ord}{\tup{u}}
\newcommand{\ordp}[1]{\tup{#1}}

\newcommand{\gc}{G^{-C}}

\newcommand{\ar}{\mathrm{ar}}
\newcommand{\free}{\mathsf{free}}
\newcommand{\vars}{\mathsf{vars}}

\newcommand{\qed}{}

\newcommand{\f}{\mathcal{F}}

\newcommand{\pow}{\wp}

\newcommand{\N}{\mathbb{N}}

\newcommand{\param}[1]{\mathsf{param}\textup{-}#1}

\newcommand{\dom}{\mathsf{dom}}

\newcommand{\org}{\mathrm{org}^+}
\newcommand{\lay}{\mathrm{lay}^+}

\newcommand{\und}[1]{\underline{#1}}

\newcommand{\clo}{\mathsf{closure}}

\newcommand{\thick}{\mathsf{thick}}
\newcommand{\thickl}{\thick_l}
\newcommand{\localthickl}{\mathsf{local}\textup{-}\thickl}
\newcommand{\quantthickl}{\mathsf{quant}\textup{-}\thickl}

\newcommand{\lowerdeg}{\mathsf{lower}\textup{-}\mathsf{deg}}

\newcommand{\restrict}{\upharpoonright}

\renewcommand{\nu}[1]{\textup{{\small $\mathsf{nu}$}-{$ #1 $}}}

\newcommand{\case}[1]{\textup{{\small $\mathsf{case}$}-{$ #1 $}}}

\newcommand{\coclique}{\textup{{\small  $\mathsf{co}$}-{\small $\mathsf{CLIQUE}$}}}
\newcommand{\clique}{\textup{{\small $\mathsf{CLIQUE}$}}}

\newcommand{\caseclique}{\case{\clique}}
\newcommand{\casecoclique}{\case{\coclique}}

\newcommand{\fpt}{\textup{\small $\mathsf{FPT}$}}
\newcommand{\wone}{\textup{\small $\mathsf{W[1]}$}}
\newcommand{\cowone}{\textup{\small $\mathsf{co}$-$\mathsf{W[1]}$}}

\renewcommand{\S}{\mathcal{S}}
\newcommand{\G}{\mathcal{G}}

\newcommand{\image}{\mathsf{image}}

\begin{abstract}
\begin{quote}
The constraint satisfaction problem (CSP)
on a
relational structure $\relb$ is to decide,
given a set of constraints on variables where the relations 
come from $\relb$, whether or not there is a assignment
to the variables satisfying all of the constraints;
the surjective CSP is the variant where one decides
the existence of a surjective satisfying assignment
onto the universe of $\relb$.
We present  an algebraic condition on the polymorphism clone
of $\relb$
and prove that it is sufficient for
the hardness of the surjective CSP on a finite structure $\relb$,
in the sense that
this problem admits a reduction from a certain
fixed-structure CSP.
To our knowledge, this is the first result
that allows one to use algebraic information 
from a relational structure $\relb$
to infer information on the complexity hardness of
surjective constraint satisfaction on $\relb$.
A corollary of our result is that, on any finite non-trivial structure 
having only essentially unary polymorphisms, 
surjective constraint satisfaction is NP-complete.
\end{quote}
\end{abstract}

\section{Introduction} \label{section:introduction}

The \emph{constraint satisfaction problem (CSP)} is 
a computational problem in which one is to decide,
given a set of constraints on variables, whether or not
there is an assignment to the variables satisfying all of the
constraints.
This problem appears in many guises throughout computer science,
for instance, in database theory, artificial intelligence, and
the study of graph homomorphisms.
One obtains a rich and natural family of problems by
defining, for each relational structure $\relb$,
the problem $\csp(\relb)$ to be the case of the CSP where
the relations used to specify constraints must come from $\relb$.
An increasing literature studies the algorithmic and complexity
behavior of this problem family, focusing on finite and finite-like
structures~\cite{BartoKozik09-boundedwidth,IMMVW10-tractabilityfewsubpowers,Bodirsky12-hab}; 
a primary research issue is to
determine which such problems are polynomial-time tractable,
and which are not.
To this end of classifying problems, a so-called \emph{algebraic approach}
has been quite fruitful~\cite{BulatovJeavonsKrokhin05-finitealgebras}.  In short, this approach is founded
on the facts that the complexity of a problem $\csp(\relb)$
depends (up to polynomial-time reducibility)
only on the set of relations that are 
\emph{primitive positive definable} from $\relb$,
and that this set of relations can be derived from the 
\emph{clone of polymorphisms} of $\relb$.  
Hence, the project of classifying all relational structures
according to the complexity of $\csp(\relb)$ can be formulated
as a classification question on clones;
this permits the employment of algebraic notions and techniques
in this project.
(See the next section for formal definitions of the notions discussed
in this introduction.)

A natural variant of the CSP is the \emph{surjective CSP},
where an instance is again a set of constraints,
but one is to decide whether or not there is a \emph{surjective}
satisfying assignment to the variables.
For each relational structure $\relb$, 
one may define $\scsp(\relb)$
to be the surjective CSP on $\relb$, in analogy to the definition
of $\csp(\relb)$.
Note that one can equivalently define $\scsp(\relb)$
to be the problem of deciding, given as input a relational structure
$\rela$,
whether or not there is a surjective homomorphism from $\rela$ to $\relb$.
An early result on this problem family was the complexity classification
of all two-element structures~\cite[Proposition
6.11]{CreignouKhannaSudan01-boolean},
~\cite[Proposition 4.7]{CreignouHebrard97-generatingall}.
There is recent interest in understanding the complexity of
these problems, which perhaps focuses on the cases where 
the structure $\relb$ is a graph; we refer the reader to the 
survey~\cite{BodirskyKaraMartin12-surjectivesurvey}
for further information and pointers, and also 
can reference the related articles~\cite{Uppman12-surcsptwoelts,GolovachPaulusmaSong12-partreflexivetrees,Hell14-prescribedpatterns}.
The introduction in the survey~\cite{BodirskyKaraMartin12-surjectivesurvey} suggests that
the problems $\scsp(\relb)$ ``seem to be very difficult to classify in
terms of complexity'', and that ``standard methods to prove easiness
or hardness fail.''
Indeed, in contrast to the vanilla CSP,
there is no known way to reduce the complexity classification of the problems
$\scsp(\relb)$ to a classification of clones.
In particular, there is no known result showing that the complexity of 
a problem $\scsp(\relb)$ depends only on the relations that are 
primitive positive definable from $\relb$.
Thus far, there has been no success in using algebraic information
based on the polymorphisms of $\relb$ 
to deduce complexity hardness consequences
for the problem $\scsp(\relb)$.
 (The claims given here are relative to the best of our knowledge).

In this article, we give (to our knowledge) the first result
which allows one to use algebraic information from
the polymorphisms of a structure $\relb$ to infer information about the
complexity hardness of $\scsp(\relb)$.  
Let us assume that the structures under discussion are finite
relational structures.
It is known and straightforward to verify that
the problem $\scsp(\relb)$ polynomial-time reduces to
the problem $\csp(\relb^+)$, where $\relb^+$ denotes the expansion
of $\relb$ by constants~\cite[Section 2]{BodirskyKaraMartin12-surjectivesurvey}.  
We give a sufficient condition for the problem $\csp(\relb^+)$
to polynomial-time reduce to the problem $\scsp(\relb)$, and hence
for the equivalence of these two problems
(up to polynomial-time reducibility).
From a high level, our sufficient condition 
requires a certain relationship 
between the diagonal and the image of an operation, 
for each operation in the polymorphism clone of $\relb$.
Any structure $\relb$ whose polymorphisms are all essentially unary
satisfies our sufficient condition, and a corollary of our main
theorem
is that, for any such structure $\relb$
(having a non-trivial universe), the problem $\scsp(\relb)$
is NP-complete.
In the classification of two-element structures~\cite[Proposition
6.11]{CreignouKhannaSudan01-boolean}, 
each structure
on which $\scsp(\relb)$ is proved NP-complete has only 
essentially unary polymorphisms (this can be inferred 
from existing results~\cite[Theorem 5.1]{Chen09-Rendezvous}).
Hence, the just-named corollary
yields a new algebraic proof of the hardness results needed for this
classification;
we find this proof to be a desirable, concise alternative to the 
relational argumentation carried out in previously known proofs of
this classification~\cite[Proposition
6.11]{CreignouKhannaSudan01-boolean},
~\cite[Proposition 4.7]{CreignouHebrard97-generatingall}.

We hope that our result might lead to further interaction between 
the study of surjective constraint satisfaction and universal algebra,
and in particular that the techniques that we present might be used
to prove new hardness results or to simplify known hardness proofs.

\section{Preliminaries}

For a natural number $n$, we use $\und{n}$ to denote the set
$\{ 1, \ldots, n \}$.
We use $\pow(B)$ to denote the power set of a set $B$.

\subsection{Logic and computational problems}

We make basic use of the syntax and semantics of 
relational first-order logic.
A \emph{signature} is a set of \emph{relation symbols};
each relation symbol $R$ has an associated arity (a natural number),
denoted by $\ar(R)$.
A \emph{structure} $\relb$ over signature $\sigma$
consists of a \emph{universe} $B$ which is a set,
and an interpretation $R^{\relb} \subseteq B^{\ar(R)}$
for each relation symbol $R \in \sigma$.
In this article, we 
assume that signatures under discussion are finite, and 
focus on finite structures; a structure is finite
if its universe is finite.
When $\relb$ is a structure over signature $\sigma$, we define $\relb^+$ to
be the expansion of $\relb$ ``by constants'',
that is, the expansion 
which is defined on signature 
$\sigma \cup \{ C_b ~|~ b \in B \}$,
where each $C_b$ has unary arity and is assumed not to be in $\sigma$,
and where $C_b^{\relb^+} = \{ b \}$.

By an \emph{atom}, we refer to a formula of the form
$R(v_1, \ldots, v_k)$ where $R$ is a relation symbol, $k = \ar(R)$,
and the $v_i$ are variables;
by a \emph{variable equality}, we refer to a formula of the form
$u = v$ where $u$ and $v$ are variables.
A \emph{pp-formula} (short for \emph{primitive positive formula})
is a formula built using atoms, variable equalities, 
conjunction $(\wedge)$, and existential quantification $(\exists)$.
A \emph{quantifier-free pp-formula} is a pp-formula that does not
contain existential quantification, that is, a pp-formula
that is a conjunction of atoms and variable equalities.
A relation $P \subseteq B^m$ 
is \emph{pp-definable} over a structure $\relb$
if there exists a pp-formula $\psi(x_1, \ldots, x_m)$
such that a tuple $(b_1, \ldots, b_m)$ is in $P$
if and only if $\relb, b_1, \ldots, b_m \models \psi$;
when such a pp-formula exists, it is called 
a \emph{pp-definition} of $P$ over $\relb$.

We now define the computational problems to be studied.
For each structure $\relb$, define
$\csp(\relb)$ to be the problem of deciding, given 
a conjunction $\phi$ of atoms (over the signature of $\relb$),
whether or not there is a map $f$ to $B$
defined on the variables of $\phi$
such that $\relb, f \models \phi$.
For each structure $\relb$, define
$\scsp(\relb)$ to be the problem of deciding, given 
a pair $(U, \phi)$ where $U$ is a set of variables and
$\phi$ is a conjunction of atoms (over the signature of $\relb$)
with variables from $U$, whether or not there is a surjective map
$f: U \to B$ 
such that $\relb, f \models \phi$.

Note that these two problems are sometimes formulated
as relational homomorphism problems;
for example,
one can define  $\scsp(\relb)$ as the problem of deciding,
given a structure $\rela$ over the signature of $\relb$,
whether or not there is a surjective homomorphism from $\rela$ to
$\relb$.
This is an equivalent formulation:
an instance $(U, \phi)$
of $\scsp(\relb)$ can be translated naturally to the structure $\rela$
with universe $U$ and where $(u_1, \ldots, u_k) \in R^{\rela}$
if and only if $R(u_1, \ldots, u_k)$ is present in $\phi$;
this structure $\rela$ admits a surjective homomorphism to $\relb$
if and only if $(U, \phi)$ is a \emph{yes} instance of $\scsp(\relb)$
as we have defined it.
One can also naturally invert this passage, to translate
from the homomorphism formulation to ours.
Let us remark that
in our formulation of $\scsp(\relb)$,
when $(U, \phi)$ is an instance, it is permitted that $U$ contain
variables that are not present in $\phi$; indeed, whether or not
the instance is a \emph{yes} instance may be sensitive to the
exact number of such variables, and this is why 
this variable set is given explicitly.

We now make a simple observation which essentially says that
one could alternatively define $\scsp(\relb)$
by allowing the formula $\phi$ to be a quantifier-free pp-formula,
as variable equalities may be efficiently eliminated
in a way that preserves the existence of a surjective satisfying assignment.

\begin{prop}
\label{prop:remove-equalities}
There exists a polynomial-time algorithm that,
given a pair $(W, \phi)$ where 
$\phi$ is a quantifier-free pp-formula with variables from $W$,
outputs a pair $(W', \phi')$ where 
$\phi'$ is a conjunction of atoms with variables from $W'$
and having the following property:
 for any structure $\relb$ (whose signature contains
the relation symbols present in $\phi$),
there exists a surjective map $f: W \to B$ such that
$\relb, f \models \phi$
if and only if
there exists a surjective map $f': W' \to B$ such that
$\relb, f' \models \phi'$.
\end{prop}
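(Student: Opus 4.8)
The plan is to treat the variable equalities occurring in $\phi$ as generators of an equivalence relation on $W$, and then to collapse each equivalence class down to a single representative variable. First I would compute the smallest equivalence relation $\sim$ on $W$ with $u \sim v$ whenever the equality $u = v$ occurs in $\phi$; using a union--find data structure this is done in time polynomial in the size of $(W, \phi)$. I would then fix a representative function $r \colon W \to W$ sending each variable to a canonical element of its $\sim$-class, set $W' = r(W)$ (so that $W'$ contains exactly one variable per class, including the singleton classes arising from variables of $W$ that do not appear in $\phi$), and form $\phi'$ by replacing every atom $R(w_1, \ldots, w_k)$ of $\phi$ with $R(r(w_1), \ldots, r(w_k))$ while discarding all variable equalities. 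The output $(W', \phi')$ is plainly computable in polynomial time, and $\phi'$ is a conjunction of atoms whose variables all lie in $W'$, as required.

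It then remains to verify the stated biconditional for an arbitrary structure $\relb$. For the forward direction, suppose $f \colon W \to B$ is surjective with $\relb, f \models \phi$. Since $f$ satisfies each equality $u = v$, it is constant on every $\sim$-class, so $f(w) = f(r(w))$ for all $w \in W$. Restricting, the map $f' = f \restrict W'$ satisfies each replaced atom (the value of $f$ on $w_i$ agrees with its value on $r(w_i)$), hence $\relb, f' \models \phi'$; moreover, because $f$ factors through the surjective quotient $w \mapsto r(w)$, we obtain $f'(W') = f(W) = B$, so $f'$ is surjective. For the converse, given a surjective $f' \colon W' \to B$ with $\relb, f' \models \phi'$, I would define $f(w) = f'(r(w))$: this $f$ is constant on $\sim$-classes and hence satisfies all the discarded equalities, it satisfies every original atom by reading the same substitution identity backwards, and $f(W) = f'(r(W)) = f'(W') = B$ shows it is surjective.

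The only point genuinely requiring care is the preservation of surjectivity in \emph{both} directions, and this is exactly what the quotient construction is designed to deliver: because the representative map $r$ is itself surjective onto $W'$, pre-composition with $r$ neither creates nor destroys the property of hitting all of $B$, so a surjective assignment on either side transports to a surjective assignment on the other. The soundness of the merging step is in turn guaranteed by the observation that any assignment satisfying the equalities is already forced to be constant on each $\sim$-class, so identifying the variables within a class loses no satisfying assignments. Everything else is bookkeeping, and I do not expect any further obstacle.
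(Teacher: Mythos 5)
Your proposal is correct and is essentially the paper's own argument: the paper eliminates variable equalities one at a time (replacing $v$ by $u$ and deleting $v$ from $W$), which, run to completion, computes exactly the quotient by the equivalence relation generated by the equalities that you construct in one pass via union--find. Your verification of the biconditional, including the preservation of surjectivity through the representative map $r$, matches the property the paper's iterative steps are checked to preserve, so there is nothing to fix.
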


\begin{proof}
The algorithm repeatedly eliminates variable equalities one at a time, until
no more exist.
Precisely, given a pair $(W, \phi)$,
it iterates the following two steps as long as $\phi$ contains a
variable equality.
The first step is to simply obtain $\phi'$ 
by removing from $\phi$ all variable equalities $u = u$
that equate the same variable, and then replace $(W, \phi)$
by $(W, \phi')$.
The second step is to check if $\phi$ contains a variable equality
$u = v$ between two different variables; if so, 
the algorithm 
picks such an equality $u = v$, 
obtains $\phi'$ by replacing all instances of $v$ with $u$,
and then replaces $(W, \phi)$ by $(W \setminus \{ v \}, \phi')$.
The output of the algorithm is the final value of $(W, \phi)$.
It is straightforwardly verified that this final value has the desired
property (by checking that each of the two steps preserve the property).
\end{proof}

\subsection{Algebra}

All operations under consideration are assumed to be
of finite arity greater than or equal to $1$.
We use $\image(f)$ to denote the image of an operation $f$.
The \emph{diagonal} of an operation $f: B^k \to B$, 
denoted by $\hat{f}$,
is the unary operation defined by $\hat{f}(b) = f(b, \ldots, b)$.
Although not the usual definition, it is correct to say that an
operation $f: B^k \to B$
is \emph{essentially unary} if and only if
there exists $i \in \und{k}$ such that
$f(b_1, \ldots, b_k) = \hat{f}(b_i)$.

When $t^1, \ldots, t^k$ are tuples on $B$ having the same arity $m$
and $f: B^k \to B$ is an operation,
the tuple $f(t^1, \ldots, t^k)$ is the arity $m$ tuple obtained by applying
$f$ coordinatewise.
The entries of a tuple $t$ of arity $m$ are denoted by $t = (t_1, \ldots, t_m)$.
Let $P \subseteq B^m$ be a relation, and let $f: B^k \to B$ be an
operation;
we say that \emph{$f$ is a polymorphism of $P$}
or that \emph{$P$ is preserved by $f$}
if for any choice of $k$ tuples $t^1, \ldots, t^k \in P$,
it holds that $f(t^1, \ldots, t^k) \in P$.
An operation $f: B^k \to B$ is a \emph{polymorphism} of
a structure $\relb$ if $f$ is a polymorphism of 
each relation of $\relb$; 
we use $\pol(\relb)$ to denote the set of all polymorphisms of
$\relb$.
It is known that, for any structure $\relb$,
the set $\pol(\relb)$ is a \emph{clone}, which is a set of operations
that contains all projections and is closed under composition.

We will make use of the following characterization of pp-definability
relative to a structure $\relb$.

\begin{theorem} \cite{Geiger68-closed,BKKR69-galois}
\label{thm:galois}
A non-empty relation $P \subseteq B^m$ is pp-definable
over a finite structure $\relb$ if and only if
each operation $f \in \pol(\relb)$ is a polymorphism of $P$.
\end{theorem}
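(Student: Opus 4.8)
The plan is to prove the two directions of the biconditional separately; the forward (``only if'') direction is routine, while the converse (``if'') direction carries all the weight. For the forward direction I would assume $P$ has a pp-definition $\psi(x_1, \ldots, x_m)$ over $\relb$ and show that an arbitrary $k$-ary $f \in \pol(\relb)$ preserves $P$. Writing $\psi$ as $\exists y_1 \cdots y_n \, \theta$ with $\theta$ a conjunction of atoms and variable equalities, I would take $t^1, \ldots, t^k \in P$, pick for each $j$ an assignment $h^j$ to all of $x_1, \ldots, x_m, y_1, \ldots, y_n$ witnessing $t^j \in P$ (so $\relb, h^j \models \theta$), and apply $f$ coordinatewise to $h^1, \ldots, h^k$ to form an assignment $h$. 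Since $f$ preserves each relation of $\relb$ it satisfies every atom of $\theta$, and since $f$ applied to equal entries yields equal entries it satisfies every variable equality of $\theta$; hence $\relb, h \models \theta$, and the restriction of $h$ to $x_1, \ldots, x_m$, which is exactly $f(t^1, \ldots, t^k)$, lies in $P$.

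The converse direction is the crux. Assume every $f \in \pol(\relb)$ preserves $P$, and write $P = \{ t^1, \ldots, t^r \}$, using that $P$ is non-empty (so $r \ge 1$) and finite. The key idea is to build a single pp-formula whose satisfying assignments are precisely the $r$-ary polymorphisms of $\relb$, read off the power structure $\relb^r$. Concretely, for each $d \in B^r$ introduce a variable $x_d$, and let $\Phi$ be the conjunction containing, for every relation symbol $R$ and every tuple $(d_1, \ldots, d_{\ar(R)})$ lying in the interpretation of $R$ in $\relb^r$ (equivalently, every tuple that is coordinatewise in $R^{\relb}$), the atom $R(x_{d_1}, \ldots, x_{d_{\ar(R)}})$. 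For each $i \in \und{m}$ let $a_i = (t^1_i, \ldots, t^r_i) \in B^r$ be the $i$-th ``row'' of the tuples of $P$, and take $\psi(x_{a_1}, \ldots, x_{a_m})$ to be $\Phi$ with all variables other than $x_{a_1}, \ldots, x_{a_m}$ existentially quantified.

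The two inclusions then fall out cleanly. A total assignment to the variables of $\Phi$ is just a map $g : B^r \to B$, and $\Phi$ holds under $g$ precisely when $g$ preserves every relation of $\relb$, i.e.\ when $g$ is an $r$-ary polymorphism of $\relb$; moreover the free part of such an assignment is $(g(a_1), \ldots, g(a_m)) = g(t^1, \ldots, t^r)$, the coordinatewise image of the tuples of $P$ under $g$. To get $P$ inside the relation defined by $\psi$, take $g$ to be the $j$-th projection $\pi_j$ for $j \in \und{r}$ (a polymorphism, since clones contain all projections), which yields $(g(a_1), \ldots, g(a_m)) = t^j$; as $j$ ranges over $\und{r}$ this recovers every tuple of $P$. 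For the reverse inclusion, any witnessing $g$ is by the above an $r$-ary polymorphism, so by the preservation hypothesis $g(t^1, \ldots, t^r) \in P$.

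The main obstacle, as is typical for this Galois connection, is exactly the converse direction, and within it the step of recognizing that the solutions of $\Phi$ coincide with the polymorphisms of $\relb$ via the power $\relb^r$, so that the preservation hypothesis on $P$ can be invoked on the tuple read off the free variables. Finiteness of $\relb$ is what makes the construction legitimate: it guarantees that $B^r$ is finite, so that $\Phi$ is a finite conjunction and hence a genuine pp-formula, while the non-emptiness of $P$ guarantees $r \ge 1$ so that the power and its projections exist. I would close by remarking that this is the classical indicator-type argument underlying the cited results of Geiger and of Bodnarchuk--Kaluznin--Kotov--Romov.
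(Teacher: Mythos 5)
Your proof is correct, and it is the classical indicator (power-structure) argument of Geiger and Bodnarchuk--Kaluznin--Kotov--Romov; the paper itself does not prove Theorem~\ref{thm:galois} but cites it to exactly these sources, so there is no in-paper proof to diverge from. Indeed, your converse-direction construction (variables indexed by $B^r$, with $\Phi$ encoding the polymorphism condition and the free variables read off the rows $a_i=(t^1_i,\ldots,t^r_i)$) is the same device the paper reuses in its proof of Lemma~\ref{lemma:qf}. One cosmetic point: when $a_i=a_{i'}$ for $i\neq i'$ your free-variable list $\psi(x_{a_1},\ldots,x_{a_m})$ repeats a variable; to get a pp-definition with $m$ distinct free variables, introduce fresh variables $z_1,\ldots,z_m$ and add the equalities $z_i = x_{a_i}$, which the paper's definition of pp-formulas explicitly permits.
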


\section{Hardness result}

Throughout this section, $B$ will be a finite set;
we set $n = |B|$ and use $b_1^*, \ldots, b_n^*$ to denote
a fixed enumeration of the elements of $B$.

We give a complexity hardness result on $\scsp(\relb)$ 
under the assumption that the polymorphism clone of $\relb$ 
satisfies a particular property, which we now define.
We say that a clone $C$ on a set $B$ is \emph{diagonal-cautious} 
if there exists a map $G: B^n \to \pow(B)$
such that:
\begin{itemize}
\item for each operation $f \in C$,
it holds that $\image(f) \subseteq G(\hat{f}(b_1^*), \ldots, \hat{f}(b_n^*))$, and
\item for each tuple $(b_1, \ldots, b_n) \in B^n$,
if $\{ b_1, \ldots, b_n \} \neq B$, then
$G(b_1, \ldots, b_n) \neq B$.
\end{itemize}
Roughly speaking, this condition yields that, when the diagonal of an operation
$f \in C$ is not surjective, then the image of $f$ is contained
in a proper subset of $B$ that is given by $G$ as a function of
$\hat{f}$.

\begin{example}
\label{ex:essentially-unary}
When a clone consists only of essentially unary operations,
it is diagonal-cautious via the map
$G(b_1, \ldots, b_n) = \{ b_1, \ldots, b_n \}$,
as for an essentially unary operation $f$, it holds that
$\image(f) \subseteq \{ \hat{f}(b_1^*), \ldots, \hat{f}(b_n^*) \} = \image(\hat{f})$.
\end{example}

\begin{example}
When each operation in a clone has a surjective diagonal,
the clone is diagonal-cautious via the map $G$ given in the
previous example.
\end{example}

The following lemma is the key to our hardness result;
it provides a \emph{quantifier-free} pp-formula which will be used
as a gadget in the hardness proof.

\begin{lemma}
\label{lemma:qf}
Suppose that $\relb$ is a finite structure whose universe $B$
has size strictly greater than $1$,
and suppose that $\pol(\relb)$ is diagonal-cautious via $G$.
There exists a quantifier-free pp-formula 
$\psi(v_1, \ldots, v_n, x, y_1, \ldots, y_m)$
such that:
\begin{itemize}

\item[(1)] If it holds that
$\relb, b_1, \ldots, b_n, c, d_1, \ldots, d_m \models \psi$,
then  $b_1, \ldots, b_n, c, d_1, \ldots, d_m \in G(b_1, \ldots, b_n)$.

\item[(2)] For each $c \in B$, it holds that
$\relb, b_1^*, \ldots, b_n^*, c \models \exists y_1 \ldots \exists y_m
\psi$.

\item[(3)] If it holds that $\relb, b_1, \ldots, b_n \models
\exists x \exists y_1 \ldots \exists y_m \psi$,
then there exists a unary polymorphism $u$ of $\relb$
such that $(u(b_1^*), \ldots, u(b_n^*)) = (b_1, \ldots, b_n)$.

\end{itemize}
\end{lemma}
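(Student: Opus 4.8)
The plan is to realize the desired gadget as the canonical quantifier-free description of the set of $n$-ary polymorphisms of $\relb$, read off from the $n$-th direct power $\relb^n$. Recall that the $n$-ary polymorphisms of $\relb$ are exactly the homomorphisms from $\relb^n$ to $\relb$. I would take the variable set of $\psi$ to be indexed by the elements of $B^n$, and let $\psi$ be the conjunction, over every relation symbol $R$ (of arity $a$) and every tuple $(\bar e^1, \ldots, \bar e^a) \in R^{\relb^n}$, of the atom $R(z_{\bar e^1}, \ldots, z_{\bar e^a})$, where $z_{\bar e}$ is the variable associated with $\bar e \in B^n$. Since $\relb$ is finite this is a finite conjunction of atoms, hence a quantifier-free pp-formula, and an assignment $t : B^n \to B$ satisfies $\psi$ precisely when $t$ is a homomorphism $\relb^n \to \relb$, i.e.\ an $n$-ary polymorphism of $\relb$. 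I would then designate the distinguished variables: $v_i$ is the variable for the constant tuple $(b_i^*, \ldots, b_i^*)$, the variable $x$ is the one for the tuple $(b_1^*, \ldots, b_n^*)$, and $y_1, \ldots, y_m$ range over the remaining elements of $B^n$ (so $m = n^n - n - 1$). The hypothesis $|B| = n > 1$ is what guarantees that $x$ is distinct from each $v_i$, since $(b_1^*, \ldots, b_n^*)$ is not a constant tuple.

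With this setup the three properties should follow by tracking, for a satisfying assignment corresponding to an $n$-ary polymorphism $f$, the values $f$ assigns to the distinguished tuples: $v_i \mapsto \hat f(b_i^*)$, $x \mapsto f(b_1^*, \ldots, b_n^*)$, and each $y_j \mapsto f(\bar e)$ for the corresponding $\bar e$. For (1), every value occurring in such an assignment lies in $\image(f)$; if the $v_i$ receive values $b_1, \ldots, b_n$ then $(b_1, \ldots, b_n) = (\hat f(b_1^*), \ldots, \hat f(b_n^*))$, so diagonal-cautiousness gives $\image(f) \subseteq G(b_1, \ldots, b_n)$, which is exactly what (1) demands. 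For (2), I would use projections: given $c = b_i^* \in B$, the projection $\pi$ onto the $i$-th coordinate is an $n$-ary polymorphism whose diagonal is the identity and which sends $(b_1^*, \ldots, b_n^*)$ to $b_i^* = c$; hence it witnesses $\exists y_1 \cdots \exists y_m\, \psi$ with $v_i \mapsto b_i^*$ and $x \mapsto c$. For (3), if $f$ is a polymorphism witnessing $\exists x \exists y_1 \cdots \exists y_m\, \psi$ with $v_i \mapsto b_i$, then its diagonal $\hat f$ is a unary polymorphism (the diagonal of a polymorphism preserves every relation, as it is obtained by applying $f$ to $n$ copies of a tuple) satisfying $(\hat f(b_1^*), \ldots, \hat f(b_n^*)) = (b_1, \ldots, b_n)$, so $u = \hat f$ works.

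The main obstacle is the requirement that $\psi$ be quantifier-free: a pp-definition of the ``unary-polymorphism'' relation would in general need existential quantifiers, and for (1) one must control the values of all auxiliary variables, not merely the free ones. Both difficulties are resolved by using the full power structure $\relb^n$ as the template, so that the auxiliary variables $y_1, \ldots, y_m$ become honest coordinates of an $n$-ary polymorphism $f$; every one of their values then lies in $\image(f)$, and diagonal-cautiousness bounds the entire image by $G(b_1, \ldots, b_n)$ at once. The remaining verifications are routine bookkeeping about which element of $B^n$ each distinguished variable names.
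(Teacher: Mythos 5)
Your proof is correct, but it reaches the gadget by a genuinely different route than the paper. The paper first fixes tuples $t^1, \ldots, t^n \in B^{(n^n)}$ whose columns enumerate $B^n$, defines $P = \{ f(t^1, \ldots, t^n) \mid f \textup{ an } n\textup{-ary polymorphism} \}$, invokes the Galois connection (Theorem~\ref{thm:galois}) to obtain a pp-definition $\exists z_1 \ldots \exists z_q\, \theta$ of $P$, and then removes the quantifiers by a substitution trick: since $t^1, \ldots, t^n \in P$ with witnesses whose columns must each coincide with some column of the $t^k$'s (every tuple of $B^n$ occurs as a free column), each quantified variable can be identified with the free variable indexing that column. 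You instead bypass Theorem~\ref{thm:galois} entirely and write the formula down explicitly as the canonical conjunctive query of the power structure $\relb^n$, with one variable per element of $B^n$; the satisfying assignments are then exactly the homomorphisms $\relb^n \to \relb$, i.e., the $n$-ary polymorphisms, and your distinguished coordinates (constant tuples for the $v_i$, the rainbow tuple $(b_1^*, \ldots, b_n^*)$ for $x$) play precisely the roles of the paper's conditions $(\beta)$ and $(\gamma)$. The two constructions define the same relation --- the paper's closing remarks note that its $\psi$ defines exactly $P$, which is also the satisfying set of your formula --- so the verifications of (1)--(3) are essentially identical: in both, a satisfying assignment \emph{is} (the value table of) a polymorphism $f$, its full value set lies in $\image(f)$, diagonal-cautiousness bounds $\image(f)$ by $G(\hat{f}(b_1^*), \ldots, \hat{f}(b_n^*))$, the projections witness (2), and $u = \hat{f}$ witnesses (3). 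What your approach buys is self-containedness and transparency: no appeal to the Geiger/BKKR theorem, an explicit formula whose size is manifestly bounded by that of $\relb^n$, and the key insight stated plainly (keeping all $n^n$ coordinates free makes auxiliary variables honest coordinates of $f$, so their values are automatically controlled). What the paper's route buys is a reusable device --- quantifier elimination from an arbitrary pp-definition whenever the free columns exhaust $B^n$ --- which it flags as possibly of independent interest; your argument, in effect, inlines the standard proof of Theorem~\ref{thm:galois} for this particular invariant relation. You also correctly isolate where $|B| > 1$ is needed (to keep $x$ distinct from the $v_i$), matching the lemma's hypothesis.
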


\begin{proof}
Let

\begin{center}
\( \begin{array}{ccc}
t^1 & = & (t^1_1, \ldots, t^1_{n^n}) \\
\vdots & & \vdots \\
t^n & = & (t^n_1, \ldots, t^n_{n^n}) \\
\end{array} \)
\end{center}
be tuples from $B^{(n^n)}$ such that the following three conditions
hold:
\begin{itemize}

\item[($\alpha$)]  
It holds that
$\{ (t^1_i, \ldots, t^n_i) ~|~ i \in \und{n^n} \} = B^n$.

\item[($\beta$)] For each $i \in \und{n}$, it holds that
$\{ t^1_i, \ldots, t^n_i \} = \{ b_i^* \}$.

\item[($\gamma$)] It holds that $\{ t^1_{n+1}, \ldots, t^n_{n+1} \} = B$.

\end{itemize}
Visualizing the tuples as rows (as above), condition $(\alpha)$ is
equivalent to the assertion that
each tuple from $B^n$ occurs exactly once as a column;
condition $(\beta)$ enforces that the first $n$ columns are
the tuples with constant values $b_1^*, \ldots, b_n^*$ (respectively);
and, condition $(\gamma)$ enforces that the $(n+1)$th column
is a \emph{rainbow} column in that each element of $B$ occurs
exactly once in that column.

Let $P$ be the $(n^n)$-ary relation
$\{ f(t^1, \ldots, t^n) ~|~ \textup{ $f$ is an $n$-ary polymorphism of
  $\relb$ } \}$.
It is well-known and straightforward to verify that 
the relation $P$ is preserved by all polymorphisms of $\relb$.
By Theorem~\ref{thm:galois},
we have that $P$ has a pp-definition
$\phi(w_1, \ldots, w_{n^n})$ over $\relb$.
We may and do assume that
$\phi$ is in prenex normal form, in particular, we assume
$\phi = \exists z_1 \ldots \exists z_q \theta(w_1, \ldots, w_{n^n},
z_1, \ldots, z_q)$ where $\theta$ is a conjunction of atoms and
equalities.

Since $t^1, \ldots, t^n \in P$, there exist tuples 
$u^1, \ldots, u^n \in B^q$
such that, for each $k \in \und{n}$, 
it holds that $\relb, (t^k, u^k) \models \theta$.
By condition $(\alpha)$, there exist values
$a_1, \ldots, a_q \in \und{n^n}$
such that, for each $i \in \und{q}$,
it holds that
$(u^1_i, \ldots, u_i^n) = (t^1_{a_i}, \ldots, t^n_{a_i})$.
Define $\psi(w_1, \ldots, w_{n^n})$ as
$\theta(w_1, \ldots, w_{n^n}, w_{a_1}, \ldots, w_{a_q})$.
We associate the variable tuples $(w_1, \ldots, w_{n^n})$
and $(v_1, \ldots, v_n, x, y_1, \ldots, y_m)$, 
so that
$\psi$ may be viewed as a formula with variables from 
$\{ v_1, \ldots, v_n, x, y_1, \ldots, y_m \}$.
We verify that $\psi$ has the three conditions given in the lemma
statement,
as follows.

(1): Suppose that $\relb, b_1, \ldots, b_n, c, d_1, \ldots, d_m
\models \psi$.  Then
$(b_1, \ldots, b_n, c, d_1, \ldots, d_m)$
is of the form $f(t^1, \ldots, t^n)$ where $f$ is a polymorphism of
$\relb$.
We have
$$\{ b_1, \ldots, b_n, c, d_1, \ldots, d_m \} \subseteq \image(f)
\subseteq G(\hat{f}(b_1^*), \ldots, \hat{f}(b_n^*))
= G(b_1, \ldots, b_n).$$
The second containment follows from the definition of 
\emph{diagonal-cautious}, and the equality follows from $(\beta)$.

(2): We had that, for each $k \in \und{n}$, 
it holds that $\relb, (t^k, u^k) \models \theta$.
By the choice of the $a_i$ and the definition of $\psi$,
it holds (for each $k \in \und{n}$) that $\relb, t^k \models \psi$.
Condition (2) then follows immediately from conditions $(\alpha)$ and $(\beta)$.

(3): Suppose that $\relb, b_1, \ldots, b_n \models 
\exists x \exists y_1 \ldots \exists y_m \psi$.
By definition of $\psi$, we have that there exists a tuple
beginning with $(b_1, \ldots, b_n)$ that satisfies $\theta$ on
$\relb$.
By the definition of $\theta$, we have that there exists a tuple $t$
beginning with $(b_1, \ldots, b_n)$ such that $t \in P$.
There exists a polymorphism $f$ of $\relb$ such that
$t = f(t^1, \ldots, t^n)$.
By condition $(\beta)$, we have that
$(\hat{f}(b_1^*), \ldots, \hat{f}(b_n^*)) = (b_1, \ldots, b_n)$.
\end{proof}

Let us make some remarks.
The relation $P$ in the just-given proof
is straightforwardly verified
(via Theorem~\ref{thm:galois})
 to be the smallest pp-definable relation 
(over $\relb$)
that contains all of the tuples $t^1, \ldots, t^n$.   
The definition of $\psi$ yields that the relation defined by $\psi$
(over $\relb$)
is a subset of $P$; the verification of condition (2) yields
that each of the tuples $t^1, \ldots, t^n$ is contained in the
relation defined by $\psi$.
Therefore, the formula $\psi$ defines precisely the relation $P$.
A key feature of the lemma, which is critical for our application
to surjective constraint satisfaction, 
is that the formula $\psi$ is quantifier-free.
We believe that it may be of interest to search for 
further applications of this lemma.

The following is our main theorem.

\begin{theorem}
\label{thm:main}
Suppose that $\relb$ is a finite structure such that
$\pol(\relb)$ is diagonal-cautious.
Then the problem $\csp(\relb^+)$ many-one polynomial-time reduces to 
$\scsp(\relb)$.
\end{theorem}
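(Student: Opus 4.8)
The plan is to transform an instance $\Phi$ of $\csp(\relb^+)$ into an instance $(U,\phi)$ of $\scsp(\relb)$ so that solutions correspond, using the quantifier-free gadget $\psi(v_1,\ldots,v_n,x,y_1,\ldots,y_m)$ from Lemma~\ref{lemma:qf} to simultaneously (a) simulate the constants of $\relb^+$ and (b) exploit surjectivity to pin down the intended assignment. I would first dispose of the trivial case $|B|=1$ directly, and otherwise apply the lemma. Introduce $n$ fresh global variables $v_1,\ldots,v_n$, whose intended values are $b_1^*,\ldots,b_n^*$. Each constant constraint $C_{b_j^*}(z)$ of $\Phi$ is simulated by the variable equality $z=v_j$; each non-constant atom $R(w_1,\ldots,w_k)$ of $\Phi$ (with $R$ in the signature of $\relb$) is kept verbatim. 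To control the image of a surjective assignment, I would attach to each original variable $w$ a fresh copy of the gadget, namely $\psi(v_1,\ldots,v_n,w,\tup{y}^{w})$ with the role of $x$ played by $w$ and with fresh $y$-variables $\tup{y}^{w}$, always including at least one gadget copy (adding a dummy one should $\Phi$ have no variables) so that condition (3) is applicable. Let $\phi$ be the conjunction of all these atoms, equalities, and gadget copies, and let $U$ be exactly the set of variables occurring in $\phi$. Since $\phi$ is a quantifier-free pp-formula, I would finally invoke Proposition~\ref{prop:remove-equalities} to pass to an equivalent instance whose formula is a genuine conjunction of atoms, preserving surjective satisfiability; all reasoning below is carried out on the quantifier-free pp-version $(U,\phi)$. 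The construction is plainly polynomial time.

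For the forward direction, suppose $g$ satisfies $\Phi$ over $\relb^+$. Define $f$ on $U$ by $f(v_i)=b_i^*$ and $f(w)=g(w)$ for each original variable $w$. The non-constant atoms hold because $g$ satisfies them; each simulated equality $z=v_j$ holds because $f(z)=g(z)=b_j^*=f(v_j)$. For the gadget copy attached to $w$, condition (2) of Lemma~\ref{lemma:qf}, applied with $c=g(w)$, supplies values for the fresh $\tup{y}^{w}$ making that copy true; here it is essential that $\psi$ is quantifier-free, so these witnesses can be recorded as an actual assignment. Finally $f$ is surjective because already $\{f(v_1),\ldots,f(v_n)\}=\{b_1^*,\ldots,b_n^*\}=B$. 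Hence $(U,\phi)$ is a yes-instance.

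For the reverse direction, suppose $f\colon U\to B$ is a surjective assignment satisfying $\phi$. By condition (1) of Lemma~\ref{lemma:qf}, every variable occurring in a gadget copy takes a value lying in $G(f(v_1),\ldots,f(v_n))$; since by construction every variable of $U$ occurs in some gadget copy, we get $\image(f)\subseteq G(f(v_1),\ldots,f(v_n))$. As $f$ is surjective, $G(f(v_1),\ldots,f(v_n))=B$, so the second clause of diagonal-cautiousness forces $\{f(v_1),\ldots,f(v_n)\}=B$. Applying condition (3) to any one gadget copy yields a unary polymorphism $u$ with $u(b_i^*)=f(v_i)$ for all $i$; since $\{f(v_i)\}=B$, this $u$ is a bijection, and on a finite structure a bijective unary polymorphism is an automorphism, so $u^{-1}$ is again a polymorphism. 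I would then set $g(w)=u^{-1}(f(w))$ on the original variables and check that $g$ satisfies $\Phi$: non-constant atoms are preserved because $u^{-1}$ is a polymorphism, and a constant constraint $C_{b_j^*}(z)$ holds because $f(z)=f(v_j)=u(b_j^*)$ gives $g(z)=u^{-1}(u(b_j^*))=b_j^*$. This yields the desired equivalence and the reduction.

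The main obstacle I anticipate is precisely this reverse direction: surjectivity does not pin $f(v_1),\ldots,f(v_n)$ to the identity configuration $b_1^*,\ldots,b_n^*$ but only to some permutation realized by a bijective polymorphism $u$ (automorphisms are unary polymorphisms, so no stronger conclusion is available from the gadget). The decisive idea is to absorb this ambiguity by composing with $u^{-1}$, which works exactly because $u^{-1}$ is still a polymorphism, hence respects every relation while sending each constant-encoding value $f(v_j)$ back to $b_j^*$. The two remaining points demanding care are the quantifier-freeness of $\psi$ (so that gadget witnesses become part of a genuine surjective assignment) and the coverage of every variable of $U$ by some gadget copy (so that $\image(f)$ is bounded by $G$); the rest is bookkeeping.
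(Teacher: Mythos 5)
Your proposal is correct and follows essentially the same route as the paper's own proof: the same gadget-per-variable construction with constants simulated by equalities $z=v_j$, the same use of conditions (1)--(3) of Lemma~\ref{lemma:qf} together with Proposition~\ref{prop:remove-equalities}, and the same key idea of composing with $u^{-1}$ for the bijective unary polymorphism $u$ in the reverse direction. The only (harmless) deviations are cosmetic: you phrase the surjectivity argument contrapositively and explicitly add a dummy gadget copy for the corner case where $\Phi$ has no variables, a case the paper leaves implicit.
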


\begin{proof}
The result is clear if the universe $B$ of $\relb$ has size $1$,
so assume that it has size strictly greater than $1$.
Let $\psi(v_1, \ldots, v_n, x, y_1, \ldots, y_m)$ 
be the quantifier-free pp-formula given by Lemma~\ref{lemma:qf}.
Let $\phi$ be an instance of $\csp(\relb^+)$ which uses variables $U$.
The reduction creates an instance 
of $\scsp(\relb)$ as follows.
It first creates a quantifier-free pp-formula $\phi'$ that uses variables
$$U' = U \cup \{ v_1, \ldots, v_n \} \cup \bigcup_{u \in U} \{ y^u_1,
\ldots, y^u_m \}.$$
Here, each of the variables given in the description of $U'$ is
assumed to be distinct from the others, so that
$|U'| = |U| + n + |U|m$.
Let $\phi^=$ be the formula obtained from $\phi$
by replacing each atom of the form $C_{b_j^*}(u)$ 
by the variable equality $u = v_j$.
The formula $\phi'$ is defined as
$\phi^= \wedge \bigwedge_{u \in U}
 \psi(v_1, \ldots, v_n, u, y_1^u, \ldots, y_m^u)$.
The output of the reduction is the algorithm of 
Proposition~\ref{prop:remove-equalities}
applied to $(U', \phi')$.

To prove the correctness of this reduction,
we need to show that 
there exists a map $f: U \to B$ such that
$\relb^+, f \models \phi$ if and only if
there exists a surjective map $f': U' \to B$
such that
$\relb, f' \models \phi'$.

For the forward direction, define 
$f^=: U \cup \{ v_1, \ldots, v_n \} \to B$
to be the extension of $f$ such that
$f^=(v_i) = b_i^*$ for each $i \in \und{n}$.
It holds that $f^=$ is surjective and 
that $\relb, f^= \models \phi^=$.
By property (2) in the statement of Lemma~\ref{lemma:qf},
there exists an extension $f': U' \to B$ of $f^=$ such that
$\relb, f' \models \phi'$.

For the backward direction, we argue as follows.
We claim that $\{ f'(v_1), \ldots, f'(v_n) \} = B$.
If not, then by 
the definition of diagonal-cautious, it holds that
$G(f'(v_1), \ldots, f'(v_n)) \neq B$; 
by property (1) in the statement of Lemma~\ref{lemma:qf}
and by the definition of $\phi'$,
it follows that $f'(u') \in G(f'(v_1), \ldots, f'(v_n))$
for each $u' \in U'$, contradicting that $f'$ is surjective.
By property (3) in the statement of Lemma~\ref{lemma:qf},
there exists a unary polymorphism $u$ of $\relb$
such that $(u(b_1^*), \ldots, u(b_n^*)) = (f'(v_1), \ldots, f'(v_n))$;
by the just-established claim, $u$ is a bijection.
Since the set of unary polymorphisms of a structure is closed under
composition 
and since
$B$ is by assumption finite, the inverse $u^{-1}$ of $u$ 
is also a polymorphism of $\relb$.
Hence it holds that $\relb, u^{-1}(f') \models \phi'$,
where $u^{-1}(f')$ denotes the composition of $f'$ with $u^{-1}$. 
Since $u^{-1}(f')$ maps each variable $v_j$ to $b_j^*$, we can infer that
$\relb^+, u^{-1}(f') \models \phi$.
\end{proof}

\begin{corollary}
Suppose that $\relb$ is a finite structure whose universe $B$
has size strictly greater than $1$.
If each polymorphism of $\relb$ is essentially unary,
then $\scsp(\relb)$ is NP-complete.
\end{corollary}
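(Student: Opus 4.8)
The plan is to establish the two halves of NP-completeness separately: membership in NP, which is routine, and NP-hardness, which I would derive from Theorem~\ref{thm:main} together with a short analysis of the polymorphisms of $\relb^+$.

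For membership, I would observe that a certificate for a \emph{yes} instance $(U, \phi)$ of $\scsp(\relb)$ is simply a surjective map $f : U \to B$; since $B$ is of fixed size and $U$ has size polynomial in the input, such a map has polynomial size, and checking that $f$ is surjective and that $\relb, f \models \phi$ takes polynomial time. Hence $\scsp(\relb) \in \mathsf{NP}$. For hardness, I would first note that since every polymorphism of $\relb$ is essentially unary, $\pol(\relb)$ is diagonal-cautious by Example~\ref{ex:essentially-unary}. Theorem~\ref{thm:main} then supplies a many-one polynomial-time reduction from $\csp(\relb^+)$ to $\scsp(\relb)$, so it suffices to prove that $\csp(\relb^+)$ is NP-hard.

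The key step is to identify $\pol(\relb^+)$. Any $f \in \pol(\relb^+)$ preserves each singleton relation $C_b^{\relb^+} = \{ b \}$, which forces $\hat{f}(b) = b$ for every $b \in B$; that is, $\hat{f}$ is the identity. But $f$ is also a polymorphism of $\relb$, hence essentially unary, so $f(b_1, \ldots, b_k) = \hat{f}(b_i) = b_i$ for some $i \in \und{k}$. Thus every polymorphism of $\relb^+$ is a projection. Since projections preserve every relation on $B$, Theorem~\ref{thm:galois} yields that every non-empty relation over $B$ is pp-definable over $\relb^+$. Consequently $\csp(\relb^+)$ can simulate $\csp(\G)$ for an arbitrary structure $\G$ on $B$, and in particular for a structure whose CSP is NP-hard; as $|B| \geq 2$, one may pp-define the two-element subset $\{ b_1^*, b_2^* \}$ as a unary relation together with the ternary not-all-equal relation on it, reducing monotone not-all-equal $3$-satisfiability to $\csp(\relb^+)$.

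I expect the main obstacle to be this final hardness step: turning ``$\pol(\relb^+)$ contains only projections'' into a concrete NP-hardness argument that works uniformly for every domain size $|B| \geq 2$. This is where the choice of an NP-hard pp-definable template matters, and I would favor the not-all-equal encoding above precisely because it sidesteps the special treatment that a coloring-based template would require in the Boolean case. The remaining ingredients --- the restriction of $\pol(\relb^+)$ to projections via the singleton constraints, and the appeal to the Galois correspondence of Theorem~\ref{thm:galois} --- are short and routine.
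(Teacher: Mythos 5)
Your proof is correct and follows essentially the same route as the paper: NP membership is routine, Example~\ref{ex:essentially-unary} gives diagonal-cautiousness, Theorem~\ref{thm:main} reduces the task to NP-hardness of $\csp(\relb^+)$, and the singleton constraints force every polymorphism of $\relb^+$ to be a projection, whence Theorem~\ref{thm:galois} makes every non-empty relation over $B$ pp-definable. The only divergence is at the very end, where the paper simply cites the well-known fact that a structure admitting only projections as polymorphisms has an NP-hard CSP, while you instantiate that fact concretely with a monotone not-all-equal $3$-satisfiability encoding over $\{b_1^*, b_2^*\}$ --- a sound, slightly more self-contained way to finish.
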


\begin{proof}
The problem $\scsp(\relb)$ is in NP whenever $\relb$ is a finite
structure,
so it suffices to prove NP-hardness.
By Example~\ref{ex:essentially-unary},
 we have that $\pol(\relb)$ is diagonal-cautious.
Hence, we can apply
Theorem~\ref{thm:main}, and it suffices to argue that
$\csp(\relb^+)$ is NP-hard.
Since $\relb^+$ is by definition the expansion of $\relb$ with constants,
the polymorphisms of $\relb^+$ are exactly the idempotent
polymorphisms
of $\relb$; here then, the polymorphisms of $\relb^+$ are
the projections.
It is well-known that a structure having only projections
as polymorphisms has a NP-hard CSP~\cite{BulatovJeavonsKrokhin05-finitealgebras}
(note that in this case, 
Theorem~\ref{thm:galois} yields that 
every relation over the structure's universe is pp-definable).
\end{proof}

\paragraph{Acknowledgements.} 
The author thanks Matt Valeriote, Barny Martin,
and Yuichi Yoshida for useful comments and feedback.
 The author was supported 
by the Spanish Project FORMALISM (TIN2007-66523), 
by the Basque Government Project S-PE12UN050(SAI12/219), and 
by the University of the Basque Country under grant UFI11/45.

\bibliographystyle{plain}

\bibliography{/Users/hubiec/Dropbox/active/writing/hubiebib.bib}

\end{document}